\documentclass[12pt,english]{article}
\usepackage{lmodern}
\usepackage[T1]{fontenc}
\usepackage[utf8]{inputenc}
\usepackage[a4paper,margin=2.5cm]{geometry}
\geometry{verbose}
\usepackage{color}
\usepackage{babel}
\usepackage{float}
\usepackage{mathtools}
\usepackage{amsmath}
\usepackage{amssymb}
\usepackage[unicode=true,
 bookmarks=false,
 breaklinks=false,pdfborder={0 0 1},
 colorlinks=true]
 {hyperref}
\hypersetup{
 urlcolor=refcolor,citecolor=refcolor}

\makeatletter

\usepackage{amsfonts}
\usepackage{amsthm}
\usepackage{tablefootnote}
\usepackage{xcolor}
\usepackage{url}
\usepackage{tabularx}
\usepackage{graphicx}
\usepackage{caption}
\usepackage{tikz}
\usepackage{pgfplots}
\pgfplotsset{compat=newest}
\usepgfplotslibrary{fillbetween}

\usepackage{csquotes}
\usepackage[round]{natbib}
\usepackage{float}
\usepackage{tocbibind}

\definecolor{refcolor}{rgb}{0, 0, 0.5}

\newtheorem{proposition}{Proposition}\theoremstyle{definition}
\newtheorem{assumption}{Assumption}

\makeatother

\date{{\large 18 July, 2023}}

\begin{document}
\title{Should Politicians be Informed? \\ Targeted Benefits and Heterogeneous Voters\thanks{We would like to thank Nuray Akin, Kemal Kivanç Aköz, Nageeb Ali, Daniel Treisman, Onursal Bağırgan, Nicholas Ziros, Jan Z\'{a}pal, Dmitriy Vorobyev, Ole Jann, Artyom Jelnov, Yiman Sun, Vladimir Shchukin, and the participants of the Workshop on Lobbying and Political Influence at the University of Utrecht for their helpful comments.}}
\author{
        Maxim Senkov\thanks{European Research University, U Haldy 200/18, 700 30 Ostrava, Czech Republic. email: maxim.senkov@eruni.org.} \\
        \and
	Arseniy Samsonov\thanks{Özyeğin University, Çekmeköy Campus Nişantepe District, Orman Street, 34794 Çekmeköy - Istanbul, Türkiye.
 email: arseniy.samsonov@ozyegin.edu.tr.}}
\date{\today}
\maketitle
\begin{abstract}
{
We compare two scenarios in a model where politicians offer local public goods to heterogeneous voters: one where politicians have access to data on voters and thus can target specific ones, and another where politicians only decide on the level of spending. When the budget is small, or the public good has a high value, access to voter information leads the winner to focus on poorer voters, enhancing voter welfare. With a larger budget or less crucial public goods, politicians target a narrow group of swing voters, which harms the voter welfare.
}\\
 {\footnotesize{}{}{}{}{}{} }\textbf{\footnotesize{}{}{}{}{}{}Keywords:}{\footnotesize{}{}{}{}{}{}
microtargeting; distributive politics; elections}\\
 {\footnotesize{}{}{}{}{}{} }\textbf{\footnotesize{}{}{}{}{}{}JEL
Classification Numbers:}{\footnotesize{}{}{}{}{}{} D82, D83.}{\footnotesize\par}

\end{abstract}

\newpage
\section{Introduction}
Do voters benefit when politicians get their data? A  benevolent policymaker designs better policies if she is more informed. However, the impact is less obvious when office-motivated competing politicians learn more about voters. In the US, politicians can design electoral promises based on individual data. At the same time, in Europe, there is an active debate on whether politicians should have this opportunity. Recently, European lawmakers restricted politicians' use of voter information \citep{brusselstimes2023}. 

We contribute to this discussion with a simple model where an Incumbent and a Challenger compete by offering a local public good to voters. Voters differ in their value of the public good, which perfectly correlates with their ideological position, reflecting the left-right spectrum. Politicians have a fixed budget to spend. We compare two scenarios: one in which politicians are informed and can target specific voters and a scenario in which they can choose the amount of spending, but not the precise voters who will get it. 

First, consider the "informed politicians" scenario. When the budget is small, or the importance of the public good to voters is high, the Incumbent wins by depleting the Challenger's resources. To do so, she exhausts the entire budget by targeting the Challenger's supporters. To secure enough votes for victory, the Challenger has to spend more than the budget allows, resulting in victory for the Incumbent. Because the public good goes to the Challenger's relatively poor supporters, it bestows a high benefit to the society overall.  In contrast, when politicians are uninformed, public goods often go to voters who do not strongly need them, leading to lower social welfare. Therefore, in the considered case, informed politicians do more good to society than uninformed ones, for the same reason that an informed and benevolent policymaker does.

With informed politicians and either a large budget or low importance of the public good to voters, the Challenger targets all Incumbent's swing voters, who constitute a relatively small group. To avoid defeat, the Incumbent also offers the public good to all voters in this group. Doing so allows the Incumbent to secure the votes of all her supporters and win. As a result, only the Incumbent's swing voters get the public good. If politicians are uninformed, the Incumbent cannot make a specific promise to this group of voters. Politicians must promise to allocate substantial portions of the budget to match each other's offers. When the budget is large or the importance of the public good to voters is low, this effect causes the scenario with uninformed politicians to yield higher voter welfare. 

To illustrate the case of the small budget, consider the 2002 gubernatorial campaign in Massachusetts, US. On the one hand, the state's budget was very tight at the time \citep{NPR2011RomneyBudget}. On the other hand, US politicians are known to rely heavily on voter polling, and, more recently, on data that social media generates. 
One of the main contenders for the gubernatorial position, Mitt Romney, learned that he was not popular among women compared to his female competitor, Shannon O'Brien \citep{Freedlander2023}. Women's concern with healthcare topics \citep{Shuppy2008} may have led to Romney's campaign proposal to expand medical coverage \citep{Ebbert2002}. Given the state's budget problems at the time, O'Brien could have hardly retained enough voters by offering generous spending, which she did not \citep{Freedlander2023}. After winning the election, Romney adopted a state-level reform that American politicians and analysts came to view as Obamacare's successful precursor \citep{Thrush2012}. Thus, in this example, the information-driven political competition resulted in the implementation of policies that improved welfare.   

The effect of information when the budget is relatively large can be illustrated using federal-level examples from developed economies such as Belgium and Germany. The Belgian Vlaams Belang party may be an example of how politicians' knowledge of voter information may coincide with a lack of welfare-improving policy proposals. The Flemish nationalists successfully use individual data to identify potential supporters and then target them with benefit offers \citep{brusselstimes2023,Belga2022FlemishParty}. While focusing on immigration and ethnic identity, the party tries to get extra votes by adopting an economic program. However, its economic policy is not well-defined and incoherent, ranging from advocating deregulation favoring small business \citep{coffe2008small} to demanding more social spending \citep{Chini2022FlemishVlaamsBelang}. During the COVID-19 pandemic, the party advocated financial support for small business owners without clear evidence that this group needed help the most \citep{Sijstermans2021VlaamsBelangA}.  
In contrast, despite not getting detailed voter information \citep{kruschinski2017restrictions}, German politicians design policies that largely meet challenges such as poverty, single parenting, the pandemic, and setbacks in education \citep{SGI2020GermanySocial}. Importantly, as German politicians do not get fine-grained data on voters' needs, instead of tailoring campaigns to particular groups \citep{khazan2013}, they commit to policies benefiting broad categories of voters. 

The remainder of the paper is organized as follows: we first review the related literature (Section \ref{sect-liter}), then we present a simple model of political competition through public good provision promises (Section \ref{sect-model}), solve the model and present the main result of the welfare comparison between scenarios with informed and uninformed politicians (Section \ref{sect-analysis}), and, finally, we conclude (Section \ref{sect-concl}). In Appendix \ref{app-extension}, we consider an extension of the baseline model, supporting the welfare comparison result in Section \ref{sect-analysis}.


\section{Literature} \label{sect-liter}

The project contributes to the rich literature on modeling distributive politics.\footnote{Most of this literature features politicians attempting to buy votes by promising redistribution. A separate research direction is studying how voters buy votes from each other, see e.g. \cite{tsakas2021vote}.} The paper contributes to the literature by exploring the new question concerning the impact of politicians’ knowledge of voter information on public good provision to the heterogeneous voters, and on voters' welfare.
The closest papers are \cite{myerson1993incentives}, \cite{lizzeri2001provision}, and \cite{dixit1996determinants}. The first two papers analyze politicians who compete by redistributing a fixed budget among homogeneous voters. Like this paper, \cite{myerson1993incentives} is concerned with politicians providing benefits to small groups. Unlike in our model, such behavior does not arise in a winner-take-all system with two politicians and is limited to more complicated electoral rules with multiple ones. As voters are homogeneous and politicians must spend the whole budget, no inefficiency arises. 

\cite{lizzeri2001provision} allow each politician to spend the whole budget on the public good that benefits all voters. The public good provision is more efficient than redistribution. In equilibrium, politicians buy the majority of voters using redistribution with positive probability, which leads to an inefficient outcome. Like in \cite{myerson1993incentives}, voters are homogeneous, and politicians must spend the whole budget. Hence, unlike this paper, \cite{lizzeri2001provision} do not consider inefficiencies that arise from politicians spending too little or targeting citizens who do not strongly need benefits. 

In contrast to \cite{myerson1993incentives} and \cite{lizzeri2001provision}, \cite{dixit1996determinants} study the setting with voters who are heterogeneous in their ideological affinity to the candidates and in how they trade-off their political preferences versus economic benefits that candidates promise. This brings their setting close to the one we consider. However, \cite{dixit1996determinants} focus on identifying which voters the candidates promise economic benefits to. In line with the predictions of the "informed politicians" scenario in our paper, they predict that politicians tend to promise benefits to relatively more moderate voters.  In contrast to their paper, we find that when the budget is sufficiently large, the politicians tend to promise benefits to the swing voters who are relatively richer, which harms voter welfare. Finally, neither of the three aforementioned papers compares a scenario when politicians can target specific voters to a benchmark such that they cannot.\footnote{\cite{krasa2014social} consider voters, who are heterogeneous in terms of ideology and wealth. They study how politicians decide on public good provision when voters balance their ideological preferences against their preference for public good provision. In contrast to our model, they do not allow the politicians to target benefits to specific subsets of voters and focus on comparative statics of equilibrium tax rate proposals.} 

Also, the paper contributes to the field of Public Economics. In particular, it is close to \cite{akbarpour2023redistributive}, who consider a benevolent social planner allocating a public good to heterogeneous citizens. The benevolent social planner cares about providing the public good to poor consumers and can not base the provision on willingness to pay. In this setup, extra information helps the social planner achieve her goal. A key distinction of our approach is that we consider a similar setup in the face of electoral competition by office-motivated candidates. At the same time, we abstract away from the mechanism design problem of learning citizens' types that \cite{akbarpour2023redistributive} focus on. In our setting, access to voter information might incentivize politicians to allocate the public good to those who need it most. However, it may also happen that the Incumbent identifies her core supporters who react to benefits, and thus, she spends less. Hence, our model shows that giving politicians more information about voters can increase or decrease voters' welfare. 

Our paper can also be related to the new and small strand of literature that studies the impact of political microtargeting on social welfare. To address this issue, several recent papers look at politicians' ability to communicate differently with different voters by accessing their data and using social media.  For example, \cite{prummer2020micro} relates the media structure to the polarization that arises in the face of microtargeting. In \citet{titova2022targeted}, the challenger wins the election by providing different verifiable messages about her policy to different voters. \citet{titova2022targeted} finds that microtargeting reduces welfare because the challenger provides less accurate information to voters. Instead of considering which voters politicians choose to give information to, we look at who the politicians promise public goods to and show that letting politicians make promises to specific voter groups might be welfare-improving.

Several other papers are relevant because of their modeling assumptions or substantive topics. \cite{gregory2011rational} model a rational dictator who eliminates his constituents who might oppose his rule. Because he has limited information, he eliminates loyalists as well as dissidents. Better information leads to fewer eliminations. A somewhat similar effect appears in our model when uninformed politicians accidentally offer the public good to voters who put little value on them.

\section{Model} \label{sect-model}

\subsection{Motivation}

We analyze two scenarios, one in which politicians can perfectly target the public good provision and one in which they decide on its amount and random voters get access to it. We label one scenario as "informed politicians" and the second one as "uninformed politicians." Voters are distributed according to their valuation of the public good, which coincides with their ideological attitude to the politicians.

We make several strong assumptions to make the model simple and highlight the main trade-offs related to providing politicians with voter information. First, we assume that a voter's position on the $[0,1]$ segment simultaneously determines her value of the public good and ideological affinities to the politicians. To give an example, in the US, poorer citizens are more likely to identify as Democrats than as Republicans \citep{PewResearch2023}. They are also more likely to use state-sponsored goods such as public transport \citep{vox2015}.

Second, we assume that ties in voting are broken in favor of the Incumbent, i.e. it suffices for the Incumbent to obtain $\frac{1}{2}$ of votes to win the election. 
This Assumption is introduced primarily for the sake of technical convenience as it helps to
ensure the existence of Nash equilibrium. Essentially, this assumption can be rationalized as a representation of the electoral advantage held by the Incumbent.

Finally, we assume that some voters get the public good and others do not, and only those who do benefit from it. We disregard any spillover effects that might arise from the public good, as this drastically simplifies the model and analysis. This assumption is grounded in reality, as often public goods such as schools, hospitals, and bomb shelters\footnote{Israel is an example of a democratic country where bomb shelters are a necessity but not everyone has access to them \citep{timesofisrael2023}.} \emph{have a local scope} and capacity constraints, making them partially excludable. They bring more benefits to people living close to them. For instance, going to an emergency room might help only if it does not take too much time. Of course, people living far from a facility may still use it. However, our setup is a tractable model of how politicians distribute scarce public goods among different communities. 

Both "informed politicians" and "uninformed politicians" scenarios in our model are stylized. Consider the case when politicians are fully informed. Even though sometimes politicians can access individual voter data, it is hard to think of public goods specific to a single person. However, this setup describes a scenario where a politician can learn that a narrow group of citizens would benefit from a public good and promise it to them. 

On the other extreme, we consider a setup where politicians have no information about voters. This assumption may seem implausible because, in most countries, there is residential segregation by income, and in developed ones, governments collect reliable income data. They also understand that poorer people benefit more strongly from public goods. However, even within the low-income category, there may be economic or personal differences that lead to different outcomes of public goods provision and that are hard to capture. Take the example of public housing. For very poor people with a high risk of unemployment and homelessness, it is a life-saving option. However, as \cite{chetty2016effects} show, growing up in public housing decreases the future earnings of children by putting them in a worse social environment. Hence, while public housing might save a very poor family, it could negatively affect a slightly wealthier one who could afford to live in a better neighborhood. Without detailed data, policymakers would not be able to differentiate such families. Instead, they would provide public housing to a broad category of "low-income families." Like in our model, it would go to both families who do and do not benefit from it.

\subsection{Informed politicians} 

There is an Incumbent ($I$, she), a Challenger ($C$, he), and a set of voters $N = [0,1]$. For the sake of style, we may refer to either the Incumbent or the Challenger as a "politician." Each voter is characterized by her ideological position $t$. One can think of voters with low values of $t$ as \emph{low-income, left-leaning citizens} and those with higher values of $t$ as \emph{high-income and right-leaning}. 

The game proceeds as follows. First, the Incumbent and the Challenger simultaneously choose sets of voters $S_I, S_C \subseteq N$ to whom they promise to deliver public goods if elected. Next, each voter observes the politicians' choices and votes for the Incumbent or the Challenger. Because each voter is infinitesimal, she does not affect the outcome. However, we assume that she follows a heuristic and votes for the politician whose electoral promise gives her the highest expected payoff. We assume that if the expected payoffs are the same, a voter votes for each politician with probability $\frac{1}{2}$. Next, elections happen. The Incumbent wins if the measure of voters who vote for her is greater than or equal to $\frac{1}{2}$. Otherwise, the Challenger wins. Tie-breaking in favor of the Incumbent reflects an Incumbency advantage. At the last stage, payoffs are realized. 

Following \cite{de2005logic}, we assume that a politician can make herself more secure in office by retaining as much of the budget as possible. In developed democracies that this paper focuses on, a candidate running for office may need more resources in the future to reward loyal special interest groups, which is not useful for voters. Therefore, it benefits her if she can win while promising fewer expenditures. Naturally, politicians care about getting to power. To formalize these considerations, we assume that there is a fixed budget $v\in (0,1]$ and for politician $i \in \{I,C\}$, the payoff equals 

\begin{equation*}
\begin{cases}
		 v - \mu(S_i), & \text{if $i$ wins}\\
           0, &  \text{otherwise,}
		 \end{cases}
\end{equation*}
where $\mu(S_i) \equiv \int_{S_i} dt$ is the measure of the set $S_i$. It captures the expenditures of politician $i \in \{I,C\}$.
For simplicity, we restrict $S_I$ and $S_C$ to be finite collections of closed, non-overlapping intervals or empty sets. Thus, the measure $\mu(S_i)$ simplifies to the sum of the lengths of the intervals constituting $S_i$.

Each voter's payoff consists of an ideology component and a component related to public goods. As is standard in the political economy literature, we assume that a voter's ideological payoff from a politician taking power is the negative of the distance between their ideological positions. For tractability, we assume that the Challenger's ideological position is $0$ and the Incumbent's is $1$. Hence, if the Challenger wins, a voter with a position $t$ gets an ideological payoff of $-t$, and if the Incumbent wins, her ideological payoff equals $-(1-t)$. If a voter with position $t$ receives a public good, she gets an additional payoff of $\alpha (1-t)$. The parameter $\alpha > 0$ measures the importance of the public good to all voters, and the term $1 - t$ captures the effect that poorer voters put more value on public goods. In sum, for a voter with position $t$, the payoff equals

\begin{equation}
\label{voter_payoff_informed}
\begin{cases}
		-(1-t) + \alpha (1 - t) \cdot 1(t \in S_I), & \text{if the Incumbent wins}\\
         -t + \alpha (1 - t) \cdot 1(t \in S_C) , & \text{if the Challenger wins.}
		 \end{cases}
\end{equation}

\subsection{Uninformed politicians} \label{sect-uninfo-pol}

The politicians and voters are the same as in the previous environment. The main difference is that now politicians cannot promise public goods to specific voters. The game proceeds as follows. First, the Incumbent and the Challenger simultaneously choose $s_I$ and $s_C$, where $s_i, i \in \{I,C\},$ represents the share of voters who will get a public good. Unlike in the previous scenario, each voter, irrespective of her position $t$, will be able to receive a public good from politician $i \in \{I,C\}$ with probability $s_i$. Second, each voter observes whether she has been randomly drawn to be promised the public good by either the Incumbent or the Challenger, or by no one.  Next, each voter votes for the politician whose electoral promise gives her a higher expected payoff. We assume that if the expected payoffs are the same, a voter votes for each politician with probability $\frac{1}{2}$. The Incumbent wins if a measure of voters of at least $\frac{1}{2}$ votes for her. Otherwise, the Challenger wins. 

Finally, payoffs are realized. The payoff for politician $i \in \{I,C\}$ equals 
\begin{equation}
\label{p}
\begin{cases}
		 v - s_i, & \text{if $i$ wins}\\
          0, &  \text{otherwise}.
		 \end{cases}
\end{equation}

The mechanism through which voters are randomly drawn to either be promised the public good or not is worth a separate discussion.
For the sake of tractability, we follow \citet{myerson1993incentives} and assume that the politicians' promises create four sets of voters. The first set is those who will benefit only from the Incumbent if she wins, the second one is those who will benefit only from the Challenger, the third is those who will benefit from both, and the fourth one is those who will benefit from none. In each set, voters are uniformly distributed on the $[0,1]$ segment according to their type $t$. The sets have respective masses of $s_I (1-s_C)$, $(1-s_I)s_C$, $s_I s_C$, and $(1-s_I)(1-s_C)$. After politicians make promises, each voter learns the set to which she belongs. Each voter gets the same ideological payoff as she does in the previous environment. To summarize, with abuse of notation, define the set $S_I$ ($S_C$) as the voters who were randomly drawn to get benefits from the Incumbent (Challenger) in case she wins. Then the payoff for voter $t$ coincides with (\ref{voter_payoff_informed}). 

In the "uninformed politicians" scenario, a politician can not choose who of the voters gets access to the public good; meanwhile, given a promise, the voters know if they will be able to access the public good or not. A natural interpretation of this relates to local public goods provision. Imagine a politician who is uninformed regarding the distribution of wealth across neighborhoods. She announces a promise to construct a public hospital of a specific capacity in a given neighborhood. As the voters know both their location and the promised location of the hospital, they know if they would have access to the hospital or not, and they vote accordingly. In Appendix \ref{app-extension}, we consider an extension of this model,  where neither the politicians nor the voters know in advance who will have access to the public good ex-post. This models a scenario in which the politicians announce only the scope of public good provision; however, who among the voters gets access to the public good is determined only at the time of provision. We show that the main result of the paper (Proposition \ref{welfare}) still holds qualitatively.

\subsection{Social welfare of voters}

We are going to use the following definition to model how good or bad it is for society when politicians get voter information. Define the \textit{"social welfare of voters"} as
\begin{equation}
\label{voter_payoff_informed_new}
\begin{cases}
		\int_0^1 -(1-t) + \alpha (1 - t) \cdot 1(t \in S_I) dt, & \text{if the Incumbent wins}\\
       \int_0^1  -t + \alpha (1 - t) \cdot 1(t \in S_C) dt , & \text{if the Challenger wins}.
		 \end{cases}
\end{equation}
The social welfare of voters is simply the integral of voters' payoffs over the set of voters. 

\subsection{Assumptions}


For any positive value of public goods $\alpha$, there will be voters who vote for the Incumbent even if only the Challenger offers them public goods. We assume that there are also voters who vote for the Challenger even if only the Incumbent offers them public goods. It is natural to think of such voters are "partisans". Assuming that they exist simplifies the analysis and is realistic \citep{hillygus2008persuadable}. To guarantee that they do, it is necessary and sufficient that 
\begin{assumption}
$\alpha < 1$.
\end{assumption}
In our model, the only instrument the politicians can use to buy the voters' favor is the promise of public goods provision. Further, the voters who are ideologically closer to the Incumbent rather than to the Challenger (i.e., ones with $t>\frac{1}{2}$) get relatively smaller value from the public good. As a result, the promise of public good provision is relatively more effective in obtaining the votes of the Challenger's ideological supporters rather than the Incumbent's ideological supporters. Thus, the Incumbent's partisan supporters, given by $t\in\left[\frac{1+\alpha}{2+\alpha}, 1\right]$, are more numerous than the Challenger's partisan supporters given by  $t\in \left[0,\frac{1-\alpha}{2-\alpha}\right]$. The economic intuition of this model element is that poorer voters care more about economic benefits than political ideology.

\subsection{Solution concept}

The solution concept is the Nash equilibrium ("equilibrium" in the rest of the text). Because voters follow a heuristic, we do not treat them as players. Hence, the only players are the Incumbent and the Challenger. Since they move simultaneously, the choice of Nash equilibrium as the solution concept is natural.  

\section{Analysis} \label{sect-analysis}

We start with analyzing the "informed politicians" scenario. The equilibrium in this scenario is presented in Proposition \ref{informed}.

\begin{proposition}
\label{informed}
Suppose that politicians are informed. Then, an equilibrium always exists, and all equilibria have the following properties. 

If $v <  \frac{1 + \alpha}{2 + \alpha} - \frac{1}{2}$ (the budget is small or the value of the public good is large), the Incumbent offers the public good to a set of voters $S_I \subset [\frac{1 - \alpha}{2 - \alpha},\frac{1}{2}]$ with $\mu(S_I) = v$ and the Challenger offers the public good to $S_C \subset [\frac{1}{2}, \frac{1 + \alpha}{2 + \alpha}]$ with $\mu(S_C) = v$. Voters with positions $t \in \big((0,\frac{1}{2}) \setminus S_I \big) \cup S_C$ vote for the Challenger and voters with positions  $t \in \big((\frac{1}{2},1) \setminus S_C \big) \cup S_I$ vote for the Incumbent.  The Incumbent wins. 

If $v > \frac{1 + \alpha}{2 + \alpha} - \frac{1}{2}$ (the budget is large or the value of the public good is small), the Incumbent and the Challenger offer the public good to the same set of voters $S_I = S_C = [\frac{1}{2}, \frac{1 + \alpha}{2 + \alpha}]$. Voters with positions $t < \frac{1}{2}$ vote for the Challenger, and voters with positions $t > \frac{1}{2}$ vote for the Incumbent. The Incumbent wins. 

\end{proposition}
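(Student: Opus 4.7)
The plan is to reduce the game to a simple vote-accounting identity, verify the candidate equilibria in each regime, and then establish uniqueness via carefully constructed deviations.

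From (\ref{voter_payoff_informed}), a voter at $t$ votes for $I$ iff $2t - 1 + \alpha(1-t)[1(t \in S_I) - 1(t \in S_C)] \geq 0$. Case analysis on the offers partitions voters into four groups: Challenger partisans $[0, \frac{1-\alpha}{2-\alpha}]$ (always vote $C$), Incumbent partisans $[\frac{1+\alpha}{2+\alpha}, 1]$ (always vote $I$), Challenger swing voters $(\frac{1-\alpha}{2-\alpha}, \frac{1}{2})$ (vote $I$ iff $I$ alone offers the good), and Incumbent swing voters $(\frac{1}{2}, \frac{1+\alpha}{2+\alpha})$ (vote $C$ iff $C$ alone offers the good). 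Setting
\[X = \mu\bigl(S_I \cap [\tfrac{1-\alpha}{2-\alpha},\tfrac{1}{2}] \setminus S_C\bigr), \quad Y = \mu\bigl(S_C \cap [\tfrac{1}{2},\tfrac{1+\alpha}{2+\alpha}] \setminus S_I\bigr),\]
direct summation yields a vote share for $I$ of $\tfrac{1}{2} + X - Y$, so $I$ wins iff $X \geq Y$. The key observation is that $Y$ is bounded above by the measure of the Incumbent-swing interval, namely $v^{*} = \tfrac{1+\alpha}{2+\alpha} - \tfrac{1}{2}$, which demarcates the two cases.

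\textbf{Existence.} In Case 1 ($v < v^{*}$), take $S_I$ of measure $v$ inside $[\tfrac{1-\alpha}{2-\alpha},\tfrac{1}{2}]$ and $S_C$ of measure $v$ inside $[\tfrac{1}{2},\tfrac{1+\alpha}{2+\alpha}]$; then $X = Y = v$, $V_I = \tfrac{1}{2}$, and both payoffs are zero. The mutual best-response check is immediate: $I$ must spend at least $v$ to match $Y = v$, and $C$ cannot exceed $Y = v$ under his budget. In Case 2 ($v > v^{*}$), take $S_I = S_C = [\tfrac{1}{2},\tfrac{1+\alpha}{2+\alpha}]$; the offers coincide, so $X = Y = 0$ and $V_I = \tfrac{1}{2}$, giving $I$ a payoff $v - v^{*} > 0$ and $C$ a payoff of zero because $I$ covers every Incumbent-swing voter, preventing positive $Y$.

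\textbf{Uniqueness} is the main obstacle. Suppose in an equilibrium $C$'s effective Incumbent-swing offer falls short of its saturation ($v$ in Case~1, $v^{*}$ in Case~2). Then $I$'s best response is to match at a strictly lower spending level and win with strictly positive payoff; but $C$ could then deviate by placing an arbitrarily small $\epsilon > 0$ of extra measure on Incumbent-swing voters disjoint from $S_I$, securing victory with strictly positive payoff $v - \mu(S_C) - \epsilon$. As no smallest such $\epsilon$ exists, $C$ would have no best response in pure strategies, a contradiction. This pins down the required level of $\mu(S_I)$. For the placement claims, an analogous deviation applies: in Case~1, if $S_I$ had positive measure on Incumbent swings (matching $C$), then $C$ could re-match $S_I$'s Challenger-swing offers and place an $\epsilon$ on unmatched Incumbent swings, winning with positive payoff; in Case~2, the same construction, exploiting the budget slack $v - v^{*} > 0$, rules out any positive measure of $S_I$ on Challenger swings. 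These forced placements, combined with the saturation requirement on $C$'s offer, collapse $(S_I, S_C)$ to the sets described in the proposition.
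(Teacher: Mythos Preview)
Your overall strategy is exactly the paper's: reduce the outcome to the identity $V_I=\tfrac12+X-Y$, verify the candidate profiles, and kill alternatives with $\epsilon$-deviations. The existence half is fine. The uniqueness half, however, has a genuine gap in the deviation you construct for the Challenger.

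You argue: if $\mu(Y_C)$ falls short of its saturation, then $I$ best-responds by ``matching,'' and $C$ can deviate by \emph{adding} an $\epsilon$-interval of Incumbent-swing voters to his existing $S_C$, winning with payoff $v-\mu(S_C)-\epsilon$. Two problems. First, augmenting $S_C$ raises $Y$ by $\epsilon$ but leaves $X$ unchanged; since in equilibrium $I$ wins, we only know $X\ge Y$, and nothing you have said rules out $X>Y$ strictly, in which case an ``arbitrarily small'' $\epsilon$ does \emph{not} secure victory. Second, even if it did, the resulting payoff $v-\mu(S_C)-\epsilon$ need not be positive: you have not bounded $\mu(S_C)$ at this point (the Challenger's equilibrium strategy could waste mass on partisans or on Challenger swings, since he loses and is indifferent). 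The correct deviation---the one the paper uses---is for $C$ to \emph{replace} $S_C$ by $X_I\cup I_\epsilon$, where $I_\epsilon$ lies in the Incumbent-swing interval outside $Y_I$. This forces $X'=0$ and $Y'=\epsilon>0$ regardless of the original $X,Y$, and costs $\mu(X_I)+\epsilon\le\mu(S_I)+\epsilon$, which is strictly below $v$ once you have shown $\mu(S_I)\le\mu(Y_C)<\text{saturation}$ via the matching bound on $I$'s best-response payoff.

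Separately, the sentence ``as no smallest such $\epsilon$ exists, $C$ would have no best response'' is logically misplaced: once you exhibit \emph{any} profitable deviation for $C$, the profile is not an equilibrium, and that is the contradiction. Finally, you never explicitly rule out equilibria in which the Challenger wins; the paper does this first (if $C$ wins he has strictly more than half the votes, hence can shave $Y_C$), and your argument implicitly relies on $I$ winning when you invoke ``$I$'s best response\ldots win with strictly positive payoff.'' Making that step explicit, and swapping the additive deviation for the replacement deviation, would close the gaps.
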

All proofs are in the Appendix.

In the first case presented in Proposition \ref{informed}, the budget is sufficiently small or the importance of the public good for the voters is sufficiently high so that the Incumbent can deplete the Challenger's resources by targeting the Challenger's moderate left-leaning supporters. More specifically, in all equilibria the Incumbent promises to spend the whole budget $v$ on moderate left, $S_I \subset [\frac{1 - \alpha}{2 - \alpha},\frac{1}{2}]$, and the Challenger - on moderate right voters, $S_C \subset [\frac{1}{2}, \frac{1 + \alpha}{2 + \alpha}]$. To win, the Challenger would have to spend more than the budget, so he has no profitable deviation.  The Incumbent would lose if she promised to spend less, so she also has no profitable deviation. Importantly, the multiplicity of equilibria does not pose a problem for the voters' welfare comparison between the informed and uninformed politicians scenario.    

In the second case presented in Proposition \ref{informed}, either the budget is sufficiently large or the importance of the public good is sufficiently low so that the Incumbent retains all of her core supporters and wins. 
As the budget is sufficiently large, the Challenger offers the public good to the whole set of moderate Incumbent's supporters, $t\in[\frac{1}{2},\frac{1 + \alpha}{2 + \alpha}]$. In response, the Incumbent offers the public good to the same set of voters. It turns out that offering the public good to all moderate right-leaning voters is enough for the Incumbent to win. Indeed, because these voters are ideologically closer to the Incumbent, they vote for her even if the Challenger also offers them the public good. Voters with positions to the right of $\frac{1 + \alpha}{2 + \alpha}$ are "partisans" who always vote for the Incumbent because of ideology. As a result, the Incumbent gets half of the votes and wins regardless of what the Challenger does.  As the Challenger cannot win, he has no profitable deviation. To see why the Incumbent has no profitable deviation, consider a situation when she offers no public goods. In that case, to win, she needs extra $\frac{1 + \alpha}{2 + \alpha} - \frac{1}{2}$ votes. Hence, the Incumbent cannot win if she deviates to spending less or targeting some other subset of voters. 

Finally, demonstrating that there are no other equilibria  is more involved. However, the proof boils down to showing that in all other strategy profiles, either the winning politician can spend less, or the loser can win by copying the winner's move and selecting an additional small set of voters. 

It is worth noting that the implications for the case of a sufficiently large budget align well with empirical evidence. The model concludes that when politicians have abundant resources, the public goods are allocated to relatively richer voters. The empirical evidence suggests that in developed democracies, like the US and Western European countries, the policy outcomes tend to reflect the preferences of the relatively richer voters \citep{gilens2005inequality,bartels2016unequal,schakel2021unequal}.

The next Proposition characterizes the equilibria when politicians are uninformed. 

\begin{proposition}
\label{uninformed}
Suppose that politicians are uninformed. In the unique equilibrium
\begin{equation}
s_I = \frac{(2 - \alpha ) v}{\alpha  (1 - 2 v) + 2}
\end{equation}
and $s_C = v$. The Incumbent gets $1/2$ of votes and wins.
\end{proposition}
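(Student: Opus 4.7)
The plan is to derive the Incumbent's vote share $V_I(s_I,s_C)$ in closed form and then read off each player's best response. Since each voter learns the realization of her treatment before voting, the electorate partitions into four subgroups of masses $s_I s_C$, $s_I(1-s_C)$, $(1-s_I)s_C$, and $(1-s_I)(1-s_C)$, each uniformly distributed on $[0,1]$. Applying the payoff structure (\ref{voter_payoff_informed}) within each group, the cutoff type $t$ who votes for the Incumbent is $\tfrac{1}{2}$ in the ``both'' and ``neither'' groups (where only ideology matters), $\tfrac{1-\alpha}{2-\alpha}$ in the ``only $I$'' group, and $\tfrac{1+\alpha}{2+\alpha}$ in the ``only $C$'' group. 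Summing the masses above each cutoff yields
\[
V_I(s_I,s_C) = \tfrac{1}{2}\bigl[s_I s_C + (1-s_I)(1-s_C)\bigr] + \tfrac{s_I(1-s_C)}{2-\alpha} + \tfrac{(1-s_I)s_C}{2+\alpha},
\]
which a direct calculation shows is strictly increasing in $s_I$ and strictly decreasing in $s_C$.

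Next, I would characterize the best-response correspondences. Because the Incumbent wins iff $V_I\geq\tfrac{1}{2}$ (tie broken in her favor) and wants to minimize $s_I$ subject to that, whenever winning is affordable her best response is the unique solution $s_I^{\ast}(s_C)$ of $V_I(s_I,s_C)=\tfrac{1}{2}$. Inverting this linear equation yields $s_I^{\ast}(s_C)=\tfrac{(2-\alpha)s_C}{\alpha(1-2s_C)+2}$, which at $s_C=v$ reproduces the formula in the statement. For the Challenger, winning requires $s_C$ to exceed the inverse threshold $s_C^{\ast}(s_I)$; if $s_I<s_I^{\ast}(v)$ he can win at a cost below $v$ and strictly prefers to, whereas if $s_I\geq s_I^{\ast}(v)$ every winning deviation costs more than the prize, so any losing share yields the same payoff $0$.

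It is then immediate that $(s_I^{\ast}(v),v)$ is a Nash equilibrium: at $s_I=s_I^{\ast}(v)$ the Challenger's winning threshold is exactly $v$, so any winning deviation yields negative payoff while any losing deviation yields $0$; and at $s_C=v$ the cheapest winning action for the Incumbent is precisely $s_I^{\ast}(v)$. For uniqueness I would argue in three steps: (i) in any equilibrium the Incumbent wins, since otherwise she could copy the Challenger's share and obtain $V_I(s_C,s_C)\geq\tfrac{1}{2}$, earning payoff $v-s_C\geq 0$ whenever $s_C\leq v$ (while $s_C>v$ is impossible for a Challenger who would be winning at a loss, as he would deviate to $s_C=0$); (ii) $s_C<v$ cannot be the Challenger's best response, because against the Incumbent's pinned-down play $s_I^{\ast}(s_C)$ he could deviate to any $s_C'\in(s_C,v)$ to unseat her with strictly positive profit $v-s_C'$; (iii) $s_C>v$ is weakly dominated by $s_C=v$, since the two agree on payoffs whenever the Incumbent wins and $s_C=v$ is strictly better whenever she loses.

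The hardest part is the uniqueness argument in (iii): the Challenger's payoff is flat at $0$ across the entire interval of losing choices in the candidate equilibrium, so a weak-dominance argument against overspending is what singles out $s_C=v$ as the unique undominated best response. The rest of the proof reduces to verifying the vote-share formula and routine algebra in inverting the linear equation $V_I=\tfrac{1}{2}$.
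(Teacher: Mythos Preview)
Your approach is essentially the paper's: compute the Incumbent's vote share, use its monotonicity in $s_I$ and $s_C$ to pin any equilibrium on the level set $V_I=\tfrac12$, rule out $s_C<v$ by a small upward deviation for the Challenger, and then invert the linear equation to recover $s_I$. Your vote-share formula, obtained by partitioning voters according to treatment realization, is algebraically identical to the paper's formula $\eta(s_I,s_C)$, which instead partitions by type interval; both give the same object.

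One point worth noting: you are actually more careful than the paper on step~(iii). The paper's proof rules out $s_C<v$ and then asserts $s_C=v$ without addressing $s_C>v$; strictly speaking, every profile $\bigl(s_I^{\ast}(s_C),\,s_C\bigr)$ with $s_C\in[v,\bar s]$ and $s_I^{\ast}(s_C)\le v$ is also a Nash equilibrium, since the Challenger is indifferent among all losing shares. Your weak-dominance observation is exactly the refinement needed to single out $s_C=v$, and you are right to flag it as the delicate step. (A small patch for your step~(i): the ``copy $s_C$'' deviation yields payoff $v-s_C$, which is only weakly positive at $s_C=v$; to cover that boundary case use the strictly cheaper deviation $s_I^{\ast}(v)<v$ instead.)
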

To see the intuition, observe that $1 - \frac{1+\alpha}{2 + \alpha}  >  \frac{1-\alpha}{2 - \alpha} $, so the Incumbent has more partisan supporters than the Challenger. Therefore, she can always win by choosing a sufficiently high $s_I$. In equilibrium, the Incumbent gets exactly $\frac{1}{2}$ of votes, because, otherwise, she could deviate by spending slightly less. On the other hand, the Incumbent's spending $s_I$ must be high enough so that she wins even if the Challenger chooses $s_C = v$. Otherwise, the Challenger could profitably deviate by choosing $s_C$ slightly smaller than $v$ and win. The result follows by direct computation. 

The following Proposition presents the main result, which is the welfare comparison between the cases of informed and uninformed politicians. 

\begin{proposition}
\label{welfare}
If $v>\frac{-\alpha ^3-6 \alpha ^2-8 \alpha }{2 \alpha ^3-16 \alpha -32}$ (the budget is large or the value of the public good is small), the social welfare of voters is higher when politicians are uninformed. Otherwise, the social welfare is higher when politicians are informed. 
\end{proposition}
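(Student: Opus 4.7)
The plan is to plug the equilibrium strategies from Propositions \ref{informed} and \ref{uninformed} into the welfare formula (\ref{voter_payoff_informed_new}) and compare. Because the Incumbent wins in every equilibrium of both scenarios, the ideological part contributes the common value $\int_0^1 -(1-t)\,dt = -1/2$, and one only needs to compare the public-good terms. In the uninformed scenario the beneficiaries of the Incumbent's promise have mass $s_I$ with types uniformly distributed on $[0,1]$, so their welfare contribution is $\alpha s_I \int_0^1(1-t)\,dt = \alpha s_I/2$; substituting $s_I = (2-\alpha)v/[\alpha(1-2v)+2]$ from Proposition \ref{uninformed} expresses $W_U$ as a strictly increasing function of $v$.

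For the informed scenario in the large-budget regime $v > \alpha/[2(2+\alpha)]$, Proposition \ref{informed} pins down $S_I = [1/2,(1+\alpha)/(2+\alpha)]$ uniquely, so direct integration gives the value $W_I^{\text{large}} = -1/2 + \alpha^2(4+\alpha)/[8(2+\alpha)^2]$, which is constant in $v$. Equating $W_I^{\text{large}} = W_U$ and solving the resulting linear equation in $v$ should deliver exactly the threshold $v^\star$ in the statement; since $W_U$ is strictly increasing in $v$ while $W_I^{\text{large}}$ is flat, the sign of $W_I^{\text{large}}-W_U$ flips exactly once, at $v^\star$. A short algebraic check, which reduces to $2\alpha^3+8\alpha^2+12\alpha > 0$, then confirms that $v^\star > \alpha/[2(2+\alpha)]$, so the threshold indeed lies inside the large-budget regime and Proposition \ref{welfare} is established for all $v > \alpha/[2(2+\alpha)]$.

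The remaining case $v < \alpha/[2(2+\alpha)]$ is the main obstacle, since Proposition \ref{informed} only constrains $S_I$ to be some measure-$v$ subset of $[(1-\alpha)/(2-\alpha),1/2]$ and equilibrium is not unique. The plan is to handle the multiplicity by a worst-case bound: because $1-t$ is strictly decreasing, $\int_{S_I}(1-t)\,dt$ is minimized over admissible $S_I$ at $S_I^\star = [1/2-v,1/2]$ (which lies in the allowed interval throughout the regime since $v < \alpha/[2(2+\alpha)] < \alpha/[2(2-\alpha)]$), giving the lower bound $W_I \geq -1/2 + \alpha(v+v^2)/2$. It then suffices to verify $v(1+v) > s_I$ for all $v \in (0,\alpha/[2(2+\alpha)]]$, which after clearing positive denominators becomes the quadratic inequality $2\alpha + (2-\alpha)v - 2\alpha v^2 > 0$. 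This is obvious at $v=0$, and substituting the upper endpoint $v=\alpha/[2(2+\alpha)]$ yields $\alpha(10+8\alpha+\alpha^2)/(2+\alpha)^2 > 0$, so the inequality holds throughout the small-budget interval and the comparison is complete.
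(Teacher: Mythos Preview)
Your proposal is correct and follows essentially the same route as the paper: compute equilibrium welfare in each scenario (separating off the common ideological term $-1/2$), compare directly in the large-budget regime to locate $v^\star$, and in the small-budget regime handle the equilibrium multiplicity via the worst-case choice $S_I=[1/2-v,1/2]$, exactly as the paper does. Your presentation is in fact slightly more explicit than the paper's (you verify that $v^\star$ lies inside the large-budget regime and use monotonicity of $W_U$ rather than leaving everything to ``direct computation''). The one place to tighten is the final step: checking the quadratic $2\alpha+(2-\alpha)v-2\alpha v^2$ at the two endpoints $v=0$ and $v=\alpha/[2(2+\alpha)]$ only yields positivity on the whole interval once you invoke concavity (the leading coefficient $-2\alpha$ is negative), so state that explicitly.
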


By Proposition \ref{informed}, if politicians are informed, the budget is large, and the value of the public good is small, the Incumbent wins by targeting moderate right-leaning voters on the segment $[\frac{1}{2},\frac{1 + \alpha}{2 + \alpha}]$. Hence, her expenditures stay fixed even if the budget grows, and spending goes to relatively rich voters. By contrast, if politicians are uninformed and the budget increases, the Challenger can promise more, and the Incumbent has to match her offer. Also, because the public good is randomly allocated, some of the poor voters benefit from it. Therefore, the scenario with uninformed politicians leads to higher welfare of voters due to higher spending and the fact that poor voters may get the public good. 

If the budget is small or the value of the public good is large and the politicians are uninformed, then the public good goes at random to all voters including those who put little value on it. By contrast, informed politicians offer it to moderate voters who value it relatively strongly. As a result, the scenario with informed politicians leads to higher welfare. Figure \ref{fig:welfare}  illustrates Proposition \ref{welfare}.

\begin{figure}[H]
  \centering
  \includegraphics[width=\linewidth]{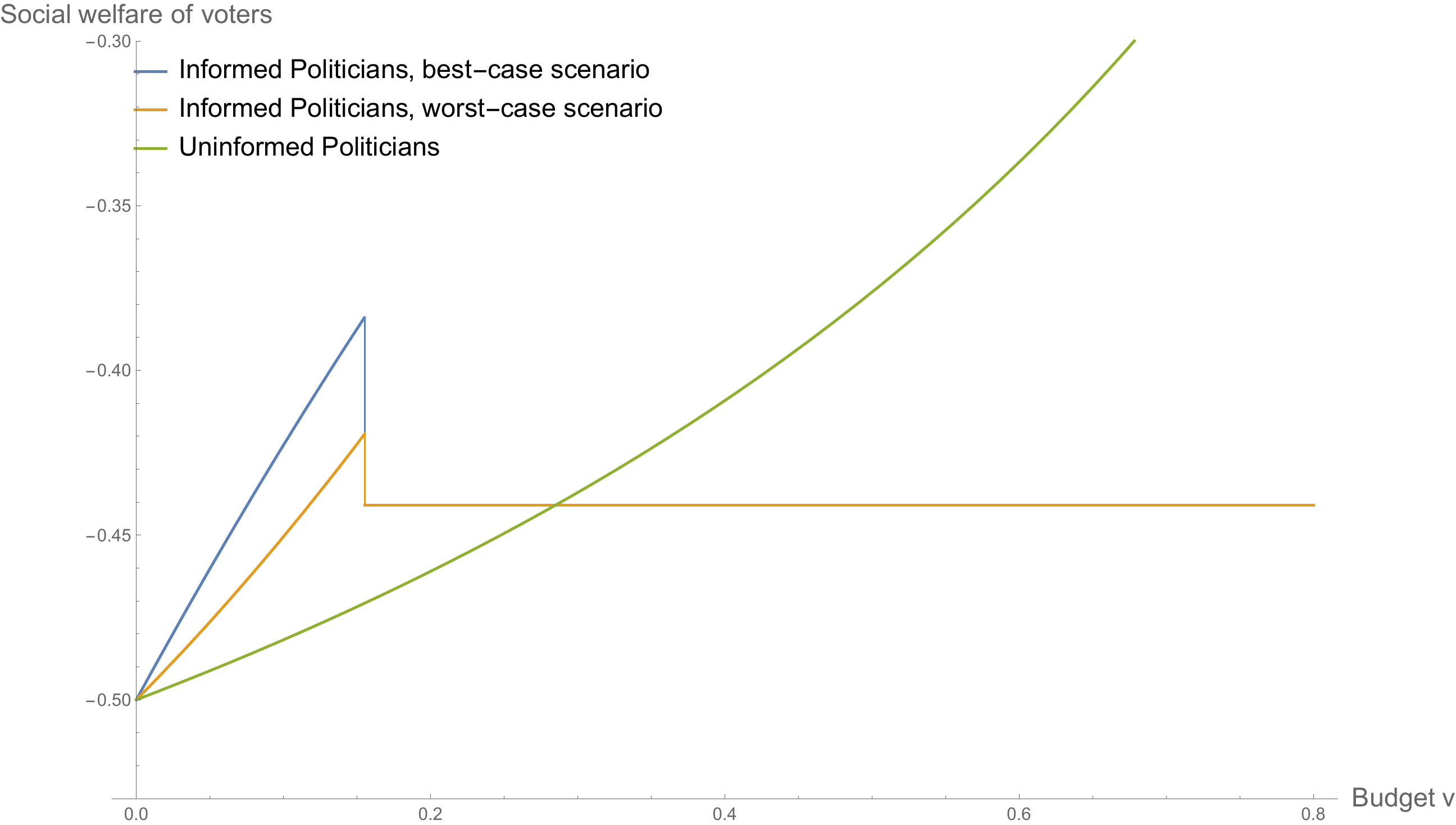}
  \caption{Social welfare of voters, value of the public good $\alpha $ fixed at $0.9$. }
  \label{fig:welfare}
\end{figure}

In the Figure, we compute the social welfare of voters under the worst- and best-case scenarios for the voters when politicians are informed (see the discussion in the proof of Proposition \ref{welfare} in Appendix). The worst-case scenario corresponds to such equilibrium that the informed Incumbent targets the richest from the moderate left-leaning voters,  $S_I = \left[\frac{1}{2}-v,\frac{1}{2}\right]$. Conversely, the best-case scenario arises when the Incumbent focuses on the poorest of the moderate left-leaning voters, where $S_I = \left[\frac{1-\alpha}{2-\alpha},\frac{1-\alpha}{2-\alpha}+v\right]$.  The discontinuity arises at a point where $v = \frac{1 + \alpha}{2 + \alpha} - \frac{1}{2} \approx 0.16$. As Proposition \ref{informed} implies, the discontinuity stems from the shift in the equilibrium strategy adopted by the politicians. Interestingly, the welfare of voters decreases at this point as the budget increases. The reason is that the Incumbent starts providing the public good to moderate-right voters who value it less instead of moderate-left voters who are poor and value it more.

\section{Discussion} \label{sect-concl}

The model compares two scenarios of political competition. In one, politicians know voter characteristics and can target public goods provision to them. In the other, they do not distinguish between voters and only decide on the level of public good spending. While these are extreme scenarios, they provide valuable benchmarks to analyze the impact of voter information on distributional decisions. The model's main contribution is that, unlike in a setting with a benevolent social planner, informed politicians may target swing voters who put low value on the public good. As a result, when the politicians' budget is large or the importance of public good is low, giving voter information to politicians leads to a lower welfare outcome as compared to when politicians are uninformed. The converse holds when the politicians' budget is small or the importance of public good is high. 

While the model analyzes public goods provision, a similar one could pertain to taxation. Consider a scenario where voters are distributed according to their income, with wealthier voters being more right-wing and vice versa. The level of government expenditures is fixed. However, politicians compete by offering tax cuts to voters, with poorer ones benefiting more intensely because of the decreasing marginal utility of money. In such a model, the same forces as in ours would lead to tax cuts being limited to a small group of swing voters, leading to social inefficiency. 

The model involves two strong assumptions. First, the budget is exogenous and fixed. This approach makes the analysis tractable and highlights non-obvious trade-offs relevant to providing politicians with voter information. Developing a model with heterogeneous voters and politicians deciding both on taxes and allocation of public goods may be a complicated task, but it is an interesting direction for further research. Second, in the model, the part of the budget that the winning politician does not spend on the public good does not benefit voters. This assumption is realistic because politicians often spend public money on benefits to lobby groups, which does not meet voters' needs. However, in reality, they may also spend less than the budget to prepare for unexpected future challenges. The government's intertemporal choice of spending does not relate to the objectives of this paper but is an interesting separate topic that deserves further attention. 

\bibliographystyle{plainnat}
\bibliography{refs_mod}

\appendix
\section*{Appendix}
\addcontentsline{toc}{section}{Appendices}
\renewcommand{\thesubsection}{\Alph{subsection}}

\subsection{Proofs} \label{app-proofs}

\begin{proof}[Proof of Proposition \ref{informed}]
We begin by characterizing "partisan voters" that do not respond to the provision of the public good and "swing voters" that do. Simple algebra shows that
\begin{equation}
\label{partisans1}
-(1 - t) > -t + \alpha(1 - t) \Leftrightarrow \frac{1 + \alpha }{2 + \alpha}<t<1
\end{equation}
and 
\begin{equation}
\label{partisans2}
-t > -(1 - t) + \alpha(1 - t)   \Leftrightarrow 0<t<\frac{1 - \alpha }{2 - \alpha }.
\end{equation}
Therefore, a voter with a position at $t > \frac{1 + \alpha }{2 + \alpha }$ ($t<\frac{1 - \alpha}{2 - \alpha}$) votes for the Incumbent (Challenger) even if only the Challenger (Incumbent) offers her the public good. For politicians, targeting them is wasteful because doing so makes them spend the budget but does not generate additional votes. Formally, a strategy profile where for some politician $i \in \{I,C\}$ $S_i \cap  \big( (0, \frac{1 - \alpha }{2 - \alpha}) \cup (\frac{1 + \alpha }{2 + \alpha },1) \big)\neq \emptyset$  can not be a part of an equilibrium. 

It is easy to show that any voter $t \in (\frac{1 - \alpha}{2 - \alpha}, \frac{1}{2})$ votes for the Incumbent if only the Incumbent offers her the public good and votes for the Challenger otherwise. Similarly, any voter $t \in (\frac{1}{2},\frac{1 + \alpha}{2 + \alpha})$ votes for the Challenger if only the Challenger offers her the public good and votes for the Incumbent otherwise. It is instructive to think of such voters as "left-" and "right-leaning swing voters," respectively. 

In equilibrium, politicians will only offer the public good to swing voters. It is
natural to separate the sets of voters $S_I$ and $S_C$ to which politicians offer the public good into subsets that consist of left- and right-leaning swing voters. Formally, define

\begin{align*}
X_I &\equiv S_I \cap  (\frac{1 - \alpha}{2 - \alpha}, \frac{1}{2}), 
&Y_I \equiv S_I \cap  (\frac{1}{2}, \frac{1 + \alpha}{2 + \alpha}),\\
X_C &\equiv S_C \cap  (\frac{1 - \alpha}{2 - \alpha}, \frac{1}{2}),
&Y_C \equiv S_C \cap   (\frac{1}{2}, \frac{1 + \alpha}{2 + \alpha}).
\end{align*}

We will now characterize two equilibria that exist under different parameter values. Later, we will show that other equilibria do not exist. First, we consider the case where the budget is large or the value of the public good is
small. Suppose that 
\begin{equation*}
v > \frac{1 + \alpha}{2 + \alpha} - \frac{1}{2}. 
\end{equation*}
Consider a strategy profile such that 
\begin{equation*}
S_I = Y_I =  S_C = Y_C = (\frac{1}{2}, \frac{1 + \alpha}{2 + \alpha}).
\end{equation*}
In this case, voters in $Y_I$ vote for the Incumbent even though the Challenger also offers them the public good. Voters to the right of $\frac{1 + \alpha}{2 + \alpha}$ always vote for the Incumbent. Hence, she gets exactly $\frac{1}{2}$ of votes and wins. The same is true if the Challenger chooses any $S_C$, so the Challenger has no profitable deviation. In substantive terms, the Incumbent wins by retaining the swing voters leaning in her favor. We will now show that the Incumbent also has no profitable deviation. Because $v > \frac{1 + \alpha}{2 + \alpha} - \frac{1}{2}$, she gets a strictly positive payoff. Suppose that the Incumbent deviates and chooses some other sets $X_I', Y_I'$. Then she wins only if 

\begin{equation*}
\begin{aligned}
\frac{1}{2} + \mu(X_I') + \mu(Y_I') - \mu(Y_C)  \geq \frac{1}{2} \Leftrightarrow \mu(X_I') + \mu(Y_I') \geq \mu(Y_C) = \frac{1 + \alpha}{2 + \alpha} - \frac{1}{2}.
\end{aligned}
\end{equation*}
It follows that the Incumbent cannot win by spending strictly less than she does in equilibrium and does not have a profitable deviation. Hence, this strategy profile is an equilibrium.

Consider now the case such that the budget is low or the value of the public good is large. Suppose that
\begin{equation*}
v < \frac{1 + \alpha}{2 + \alpha} - \frac{1}{2} 
\end{equation*} 
and consider a strategy profile such that
\begin{equation*}
S_I = X_I \subset (\frac{1 - \alpha}{2 - \alpha}, \frac{1}{2}), \mu(S_I) = v
\end{equation*}
and 
\begin{equation*}
S_C = Y_C \subset (\frac{1}{2}, \frac{1 + \alpha}{2 + \alpha}), \mu(S_C) = v.
\end{equation*}
First, observe that because $\frac{1}{2}-\frac{1-\alpha }{2-\alpha }>\frac{1 + \alpha}{2 + \alpha }-\frac{1}{2} > v$, this strategy profile is feasible and the winner pays weakly less than $v$.  The vote share for each politician is $\frac{1}{2}$, so the Incumbent wins. If the Incumbent deviates, then she wins only if 

\begin{equation*}
\begin{aligned}
\frac{1}{2} + \mu(X_I') + \mu(Y_I') - \mu(Y_C)   = \frac{1}{2} + \mu(X_I') + \mu(Y_I') - v \geq \frac{1}{2}\\ \Leftrightarrow 
\mu(X_I') + \mu(Y_I') \geq v
\end{aligned}
\end{equation*}
and therefore, the Incumbent cannot win and spend strictly less than $v$. Analogously, suppose that the Challenger deviates to some $X_C', Y_C'$. Then he wins only if 

\begin{equation*}
\begin{aligned}
\frac{1}{2} + \mu(X_C') + \mu(Y_C') - \mu(X_I)   = \frac{1}{2} +  \mu(X_C') + \mu(Y_C')  - v \geq \frac{1}{2}\\ \Leftrightarrow 
\mu(X_C') + \mu(Y_C') \geq v.
\end{aligned}
\end{equation*}
Hence, the Challenger also cannot win and spend strictly less than $v$. It follows that neither politician has a profitable deviation and this strategy profile is an equilibrium.

We proceed with proving that no other equilibria exist. First, we show that there is no equilibrium in which the Challenger wins, and then show there are no other equilibria in which the Incumbent wins. Consider strategy profiles such that the Challenger wins. In such profiles, the Challenger gets strictly more than $\frac{1}{2}$ of votes. Therefore $\mu(Y_C \setminus Y_I) > 0 $. But then the Challenger can deviate by setting $Y_C' = Y_C \setminus I_\epsilon$ where $I_\epsilon \subset (Y_C \setminus Y_I) $ is an interval of length $\epsilon$ and $\epsilon$ is sufficiently small. The interval exists by continuity of the voter type space. In this case, the Challenger continues to get strictly more than $\frac{1}{2}$ of votes and spends strictly less than before. 

Here, we show by contradiction that if the budget is large or the public good importance is low, then in any equilibrium, the Incumbent targets all moderate right-leaning voters, and the Challenger does the same. 
Consider strategy profiles such that the Incumbent wins. Suppose  that $v >   \frac{1 + \alpha}{2 + \alpha} -\frac{1}{2}  $. Consider any strategy profile such that $\mu(Y_I) < \frac{1 + \alpha}{2 + \alpha} - \frac{1}{2}$. Observe that if $\mu(X_I) \geq v$, then the Incumbent may deviate to setting $S_I = Y_I = (\frac{1}{2},\frac{1 + \alpha}{2 + \alpha})$, which ensures that she wins and spends less than $v$. Therefore, $\mu(X_I) < v$. Because  $\mu(Y_I) <  \frac{1 + \alpha}{2 + \alpha} - \frac{1}{2}$, there exists $ \bar \epsilon >0$ such that for any $\epsilon \in (0, \bar \epsilon)$ there is an interval $I_\epsilon \subset (\frac{1}{2},\frac{1 + \alpha}{2 + \alpha})$ with $\mu(I_\epsilon)  = \epsilon$. If the Challenger sets $X_C = X_I$ and $Y_C = I_\epsilon$, he wins strictly more than $\frac{1}{2}$ of the votes and pays $\mu(X_I) + \mu(I_\epsilon) = \mu(X_C) + \epsilon $ which is strictly less than $v$ when $\epsilon$ is sufficiently small. Hence, the Challenger has a profitable deviation, and such a strategy profile is not an equilibrium in which the Incumbent wins.  It follows that in any equilibrium, the Incumbent chooses $Y_I$ such that $\mu(Y_I) =   \frac{1 + \alpha}{2 + \alpha} - \frac{1}{2} \Rightarrow Y_I = (\frac{1}{2},  \frac{1 + \alpha}{2 + \alpha} )$. Because for the Incumbent, setting $Y_I = (\frac{1}{2},  \frac{1 + \alpha}{2 + \alpha} )$ is sufficient to win, her strategy such that $\mu(X_I) > 0 $ and $Y_I = (\frac{1}{2},  \frac{1 + \alpha}{2 + \alpha} )$ cannot be an part of equilibrium.

Here, we prove the absence of any other equilibria in the case of a small budget or high public good importance. We proceed in two steps: first, we show that in any equilibrium, the Incumbent spends all budget on moderate left-leaning voters; second, we show that in any equilibrium, the Challenger spends all budget on moderate right-leaning voters.
Suppose that $v < \frac{1 + \alpha}{2 + \alpha} -  \frac{1}{2} $. Consider a strategy profile such that $\mu(X_I) < v$. Observe that $\mu(Y_I) \leq v < \frac{1 + \alpha}{2 + \alpha} -  \frac{1}{2} $ because otherwise, the Incumbent would be spending more than the budget. Hence, for a sufficiently small $\epsilon$, there exists a segment $I_\epsilon \subset (\frac{1}{2},\frac{1 + \alpha}{2 + \alpha}) \setminus Y_I$ such that $\mu(I_\epsilon) = \epsilon$. For this reason and because $\mu(X_I) < v$, the Challenger can choose $X_C = X_I$ and $Y_C = I_\epsilon$, winning strictly more than $\frac{1}{2}$ of votes and spending $\mu(X_I) + \epsilon < v$. Hence, the Challenger has a profitable deviation. 
The Incumbent will not choose $X_I$ such that $\mu(X_I) > v$. It follows that in equilibrium, the Incumbent sets $\mu(X_I) = v$. Clearly, she also sets $\mu(Y_I) = 0$. 

The Incumbent's vote share is $\frac{1}{2} - \mu(X_C)  - \mu(Y_C)  + v \geq \frac{1}{2} \Leftrightarrow \mu(X_C) + \mu(Y_C) \leq v$, where the inequality follows from the assumption that the Incumbent wins. It trivially follows that $\mu(X_C)\leq v,\mu(Y_C) \leq v$. 
Suppose that $\mu(Y_C) < v$. In this case, the Incumbent can deviate to $S_I = Y_I = Y_C$, obtain $\frac{1}{2}$ of votes at a cost of $\mu(Y_I) = \mu(Y_C) < v$, and win. Therefore, there exists a profitable deviation for the Incumbent. 
Thus, if $v < \frac{1 + \alpha}{2 + \alpha} - \frac{1}{2}$, the necessary condition for an equilibrium is given by $\mu(Y_C) = v$ and $\mu(X_C) = 0$. Given the sufficient condition for equilibrium obtained in the first part of the proof, a strategy profile is an equilibrium, given $v < \frac{1 + \alpha}{2 + \alpha} - \frac{1}{2}$, if and only if $S_I = X_I$ with $\mu(X_I) = v$ and $S_C = Y_C$ with $\mu(Y_C) = v$. 
\end{proof}

\begin{proof}[Proof of Proposition \ref{uninformed}]
The argument in the proof of Proposition \ref{informed} shows that voters with positions $t \in (0,\frac{1 - \alpha}{2 - \alpha})$ never vote for the Incumbent, voters with $t \in (\frac{1 - \alpha}{2 - \alpha},\frac{1}{2})$ vote for the Incumbent only if she offers them the public good but the Challenger does not, voters with $t \in (\frac{1}{2}, \frac{1 + \alpha}{2 + \alpha})$ vote for the Incumbent, unless the Challenger offers them the public good and the Incumbent does not, and voters with $t \in (\frac{1 + \alpha}{2 + \alpha},1)$ always vote for the Incumbent. Hence, the Incumbent's vote share equals 
\begin{equation}
\label{incs_vote_share}
\begin{aligned}
\eta(s_I,s_C) \equiv \left(1-\frac{1 + \alpha }{2 + \alpha }\right)+\left(\frac{1 + \alpha }{2 + \alpha }-\frac{1}{2}\right) (1-s_C (1-s_I))\\+\left(\frac{1}{2}-\frac{1-\alpha }{2-\alpha }\right) (1-s_C) s_I.
\end{aligned}
\end{equation}
Clearly, $\eta(s_I,s_C)$ is continuous and increasing (decreasing) in $s_I$ ($s_C$). If $\eta(s_I,s_C) > \frac{1}{2}$ then the Incumbent could profitably deviate by decreasing $s_I$. Similarly, if $\eta(s_I,s_C) < \frac{1}{2}$ then the Challenger could profitably deviate by decreasing $s_C$. It follows that if there is an equilibrium, then $\eta(s_I,s_C) = \frac{1}{2}$. Consider such strategy profiles. It is easy to verify by direct computation that for any $s_I,s_C\in (0,1)$ $\frac{d \eta(s_I,s_C)}{d s_C} < 0$. Therefore, if $\eta(s_I,s_C) = \frac{1}{2}$ and $s_C < v$, the Challenger could profitably deviate by increasing $s_C$ by a small amount. It follows that in any equilibrium, $\eta(s_I,s_C) = \frac{1}{2}$ and $s_C = v$. Substituting into (\ref{incs_vote_share}) and rearranging yields 

\begin{equation*}
s_I = \frac{(2 - \alpha ) v}{\alpha  (1 - 2 v) + 2} < v.
\end{equation*}
The inequality follows from a direct computation. Because $\frac{d \eta(s_I,s_C)}{d s_I} > 0$, if the Incumbent decreases $s_I$, she loses and gets $0$ while under the current strategy profile, she wins and gets a strictly positive payoff. Hence, there is no profitable deviation for the Incumbent. If the Challenger deviates to lower $s_C$, he loses. If he deviates to higher $s_C$, she wins but spends more than the budget. Thus, there is no profitable deviation for the Challenger. It follows that the strategy profile is an equilibrium. 
\end{proof}

\begin{proof}[Proof of Proposition \ref{welfare}]
We can compute the social welfare of voters using Propositions \ref{informed} and \ref{uninformed}. If $v > \frac{1 + \alpha }{2 + \alpha }-\frac{1}{2}$ and politicians are informed, the social welfare of voters equals
\begin{equation*}
\label{eq:SW_informed_low_v}
\int_0^1 -(1-t)dt + \int_{\frac{1}{2}}^{\frac{1 + \alpha }{2 + \alpha }} \alpha (1-t)dt = \frac{\alpha ^3-16 \alpha -16}{8 (\alpha +2)^2}.
\end{equation*}
If $v < \frac{1 + \alpha }{2 + \alpha }-\frac{1}{2}$ and politicians are informed, the social welfare of voters equals
\begin{equation*}
\int_0^1 -(1-t)dt + \int_{S_I} \alpha (1-t)dt,
\end{equation*}
where $S_I \subset (\frac{1 - \alpha}{2 - \alpha},\frac{1}{2})$ and $\mu(S_I) = v$. Because the function $1-t$ decreases in $t$, 
\begin{equation*}
\int_{S_I} \alpha (1-t)dt \geq \int_{\frac{1}{2} - v}^{\frac{1}{2}} \alpha (1-t)dt 
\end{equation*}
and therefore 
\begin{equation}
\label{eq:SW_informed_high_v}
\begin{aligned}
\int_0^1 -(1-t)dt + \int_{S_I} \alpha (1-t)dt \geq \\
\int_0^1 -(1-t)dt + \int_{\frac{1}{2} - v}^{\frac{1}{2}} \alpha (1-t)dt = \frac{1}{2} \left(\alpha  v^2+\alpha  v-1\right).
\end{aligned}
\end{equation}

If politicians are uninformed, the Incumbent wins by promising $s_I = \frac{(2 - \alpha ) v}{ \alpha(1 - 2 v) + 2}$
and therefore, the social welfare of voters becomes
\begin{equation*}
\begin{aligned}
\int_0^1 -(1-t)dt + \frac{(2 - \alpha ) v}{ \alpha(1 - 2 v) + 2} \int_{0}^{1} \alpha (1-t)dt = \frac{\alpha +\alpha ^2 v-4 \alpha  v+2}{\alpha  (4 v-2)-4}.
\end{aligned}
\end{equation*}
Direct computation shows that if $v > \frac{1 + \alpha }{2 + \alpha }-\frac{1}{2}$, then
\begin{equation*}
\begin{aligned}
\frac{\alpha ^3-16 \alpha -16}{8 (\alpha +2)^2} < \frac{\alpha +\alpha ^2 v-4 \alpha  v+2}{\alpha  (4 v-2)-4} \\
\Leftrightarrow v > \frac{-\alpha ^3-6 \alpha ^2-8 \alpha }{2 \alpha ^3-16 \alpha -32}. 
\end{aligned}
\end{equation*}
It also shows that if $v < \frac{1 + \alpha }{2 + \alpha }-\frac{1}{2}$ then
\begin{equation*}
\begin{aligned}
\frac{1}{2} \left(\alpha  v^2+\alpha  v-1\right) > \frac{\alpha +\alpha ^2 v-4 \alpha  v+2}{\alpha  (4 v-2)-4}.
\end{aligned}
\end{equation*}
Hence, welfare spending when politicians are informed leads to higher social welfare of voters even in the worst-case scenario. 

Finally, we include the derivation of the best-case scenario and worst-case scenario voter welfare, used in Figure \ref{fig:welfare}.
The social welfare of voters in the worst-case scenario is given by the second line in (\ref{eq:SW_informed_high_v}).
If $v < \frac{1 + \alpha}{2 + \alpha} - \frac{1}{2}$, the social welfare of voters in the best-case scenario equals
\begin{equation*}
\int_0^1 -(1-t)dt + \int_{\frac{1 - \alpha}{2 - \alpha} }^{\frac{1 - \alpha}{2 - \alpha} + v} \alpha (1-t)dt = \frac{\alpha ^2 \left(-v^2\right)+\alpha  \left(2 v^2-2 v-1\right)+2}{2 (\alpha -2)}.
\end{equation*}

\end{proof}

\subsection{Extension: uninformed politicians and voters} \label{app-extension}

Consider changing the "uninformed politicians" scenario from Section \ref{sect-uninfo-pol} as follows. After the politicians announce the mass of public good to be provided,  $s_i, i\in\{I,C\}$, the voters observe the politicians' announcements; yet, they do not observe if they will have access to the public good or not, and thus attach a probability $s_i$ to having access to public good if politician $i$ wins. Thus, the voters' payoffs are now given by
\begin{equation*}
\begin{cases}
		-(1-t) + \alpha (1 - t) \cdot s_I, & \text{if the Incumbent wins}\\
         -t + \alpha (1 - t) \cdot s_C , & \text{if the Challenger wins.}
		 \end{cases}
\end{equation*}

Following the proof of Proposition \ref{uninformed}, it is straightforward to show that the Challenger chooses $s_C = v$ in equilibrium.
Thus, the position of an indifferent voter as a function of $s_I$ is implicitly given by
\[	-(1-t) + \alpha (1 - t) \cdot s_I =
         -t + \alpha (1 - t) \cdot v,\]
which yields 
\[\hat{t}=\frac{1+\alpha(v-s_I)}{2+\alpha(v-s_I)}. \]
Note that $\hat{t}$ decreases in $s_I$. Thus, in the unique equilibrium, $s_I$ solves $\hat{t} = \frac{1}{2}$, which yields $s_I = v$. This proves the following result.
\begin{proposition} \label{pr:extension}
Suppose that politicians are uninformed and, before voting, the voters do not know if they will receive the public good or not. In the unique equilibrium  both $s_I = v$ and $s_C = v$.   
\end{proposition}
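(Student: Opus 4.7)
The plan is to mirror the proof of Proposition \ref{uninformed} while adapting it to the fact that voters now use the announced probabilities $s_I,s_C$ directly in their expected payoffs rather than conditioning on their realized draws. First I would compute the threshold type $\hat{t}(s_I,s_C)$ at which a voter is indifferent between the two candidates. Equating the two conditional payoffs displayed in the statement of the extension and solving (the substitution $u=1-t$ isolates the algebra) yields the clean formula
\[
\hat{t}(s_I,s_C)=\frac{1+\alpha(s_C-s_I)}{2+\alpha(s_C-s_I)},
\]
and a quick inspection shows that the set of voters strictly preferring the Incumbent is $(\hat{t},1]$. Hence the Incumbent's vote share is $1-\hat{t}(s_I,s_C)$, and a short manipulation reveals that this quantity is $\geq \tfrac12$ if and only if $s_I\geq s_C$. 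This algebraic collapse — the win condition reducing to a pure comparison of $s_I$ and $s_C$ — is the key observation that makes the remaining argument painless.

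Given this characterization of who wins, I would run a two-step undercut argument. The Incumbent wins (ties broken in her favor) iff $s_I\geq s_C$, so her best response is to match $s_C$ exactly: any $s_I>s_C$ wastes budget while any $s_I<s_C$ forfeits the election. The Challenger, whose payoff is $v-s_C$ when he wins and $0$ when he loses, only raises $s_C$ above $s_I$ when doing so delivers a positive net payoff $v-s_C>0$. Combining these two best-response requirements, the only profile that cannot be profitably perturbed is $s_I=s_C=v$: any candidate equilibrium with $s_I=s_C<v$ is broken by the Challenger deviating to some $s_C'\in(s_I,v)$, winning and earning $v-s_C'>0$, while any profile with $s_I>s_C$ is broken by the Incumbent reducing $s_I$ toward $s_C$.

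Finally, I would verify that $s_I=s_C=v$ is indeed an equilibrium: lowering $s_I$ flips the election and still yields the Incumbent $0$; raising $s_I$ is strictly worse; lowering $s_C$ leaves the Incumbent winning and yields the Challenger $0$; raising $s_C$ above $v$ wins but returns a negative payoff. Hence no profitable deviation exists, and uniqueness follows from the elimination argument above. The one subtle point — and the step I expect to be the main obstacle — is the non-existence of a smallest $s_C>s_I$ when $s_I<v$: as in the proof of Proposition \ref{uninformed}, I would handle this indirectly by ruling out any $s_I<v$ through an explicit Challenger deviation to the midpoint $(s_I+v)/2$, rather than attempting to pin down a pointwise best response for the Challenger.
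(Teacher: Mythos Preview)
Your proposal is correct and follows essentially the same logic as the paper's proof: compute the indifferent voter $\hat{t}$, observe that the Incumbent wins precisely when $s_I\geq s_C$, and then use the same undercut argument from Proposition~\ref{uninformed} to force $s_C=v$ and hence $s_I=v$. The paper's own argument is extremely terse---it first asserts $s_C=v$ by appeal to the proof of Proposition~\ref{uninformed}, then solves $\hat{t}(s_I,v)=\tfrac12$---whereas you derive the general threshold $\hat{t}(s_I,s_C)$ first and make the collapse to the comparison $s_I\gtrless s_C$ explicit; this is a cleaner and more self-contained presentation of the same idea. One small wording issue: your phrase ``her best response is to match $s_C$ exactly'' is not literally true when $s_C=v$ (any $s_I\leq v$ also yields payoff $0$), but your actual elimination-plus-verification argument does not rely on that claim and goes through as written.
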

In equilibrium described in Proposition \ref{pr:extension}, all voters to the right of $t=\hat{t}$ vote for the Incumbent, and all those to the left vote for the Challenger. Thus, the Incumbent obtains $\frac{1}{2}$ of votes and wins. Hence, the social welfare of voters in the "uninformed politicians" scenario is now given by 
\begin{equation}
\label{eq:SW_noinf_extension}
\int_0^1 -(1-t)dt + \int_{0}^{1} \alpha \cdot v (1-t)dt = \frac{1}{2}(v\alpha -1).
\end{equation}
Comparing the social welfare in the "informed politicians" scenario and "uninformed politicians" scenario, it is straightforward to show that for any $\alpha$ there exists a unique cutoff value of the budget $v$. 
\begin{proposition} \label{pr:ext_welfare}
If $v>\frac{1+\alpha}{2+\alpha}-\frac{1}{2}$ (the budget is large and the value of the public good is small), the social welfare of voters is higher when politicians and voters are uninformed. Otherwise, the social welfare is higher when politicians are informed.     
\end{proposition}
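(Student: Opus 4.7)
The strategy is to reuse the three welfare expressions already computed in the proofs of Propositions \ref{informed} and \ref{welfare}, but replace the baseline uninformed welfare with the simpler expression from \eqref{eq:SW_noinf_extension}, and then verify the inequality separately in the two equilibrium regimes of Proposition \ref{informed}. Set $v^{\star}\equiv\frac{1+\alpha}{2+\alpha}-\frac{1}{2}$, which is the cutoff distinguishing the two informed-scenario regimes. I would first record that the uninformed-case welfare is $W_U(v)=\tfrac{1}{2}(v\alpha-1)$ by \eqref{eq:SW_noinf_extension}, which is strictly increasing and linear in $v$.

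For the small-budget regime $v<v^{\star}$, the informed Incumbent's equilibrium is not unique, so I would compare against the worst-case welfare derived in the proof of Proposition \ref{welfare}, namely $W_I^{\text{small}}\geq\tfrac{1}{2}(\alpha v^{2}+\alpha v-1)$ from \eqref{eq:SW_informed_high_v}. A one-line subtraction gives $W_I^{\text{small}}-W_U(v)\geq\tfrac{1}{2}\alpha v^{2}>0$, so informed strictly dominates in every equilibrium. This is genuinely trivial once the formulas are lined up, which is the main payoff of the extension's cleaner structure.

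For the large-budget regime $v>v^{\star}$, the informed welfare is the $v$-independent constant $W_I^{\text{large}}=\frac{\alpha^{3}-16\alpha-16}{8(\alpha+2)^{2}}$, while $W_U(v)$ is strictly increasing. It therefore suffices to check the inequality at the boundary $v=v^{\star}$ and invoke monotonicity. Substituting $v^{\star}=\frac{\alpha}{2(2+\alpha)}$ into $W_U$, bringing the two expressions to the common denominator $8(\alpha+2)^{2}$, and simplifying should leave $W_U(v^{\star})-W_I^{\text{large}}=\frac{\alpha^{3}}{8(\alpha+2)^{2}}>0$ for $\alpha\in(0,1)$. Combined with strict monotonicity of $W_U$, this yields $W_U(v)>W_I^{\text{large}}$ for all $v>v^{\star}$.

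The only technical step requiring care is the boundary computation at $v^{\star}$, and the specific cancellation leaving $\alpha^{3}$ in the numerator is exactly what makes the cutoff in Proposition \ref{pr:ext_welfare} coincide with $v^{\star}$ rather than with the more involved threshold appearing in Proposition \ref{welfare}. The intuition behind this simplification is worth flagging: in the extension the voters no longer condition on learning whether they personally will receive the public good, so both politicians spend the full budget $v$ in equilibrium by Proposition \ref{pr:extension}. This eliminates the asymmetric spending present in the baseline uninformed scenario and makes $W_U$ a simple linear function of $v$, which is why the proof collapses to one algebraic check per regime.
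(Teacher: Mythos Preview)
Your proposal is correct and follows precisely the approach the paper intends: compare the informed-scenario welfare expressions already computed in the proof of Proposition~\ref{welfare} against the simpler uninformed welfare \eqref{eq:SW_noinf_extension}, handling the two regimes of Proposition~\ref{informed} separately. The paper itself only asserts that the comparison is ``straightforward'' and states the result, so your proof simply fills in the details; in particular, your boundary computation $W_U(v^{\star})-W_I^{\text{large}}=\frac{\alpha^{3}}{8(\alpha+2)^{2}}$ and the observation that $W_I^{\text{small}}-W_U(v)\geq\tfrac{1}{2}\alpha v^{2}$ are exactly the checks that make the claim rigorous.
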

This welfare comparison is presented in Figure \ref{fig:welfare-ext}.  Finally, as the cutoff value of the budget $v$ from Proposition \ref{pr:ext_welfare} is strictly  between $0$ and $1$,  the welfare result from Proposition \ref{welfare} still holds qualitatively.
\begin{figure}[H]
  \centering
  \includegraphics[width=\linewidth]{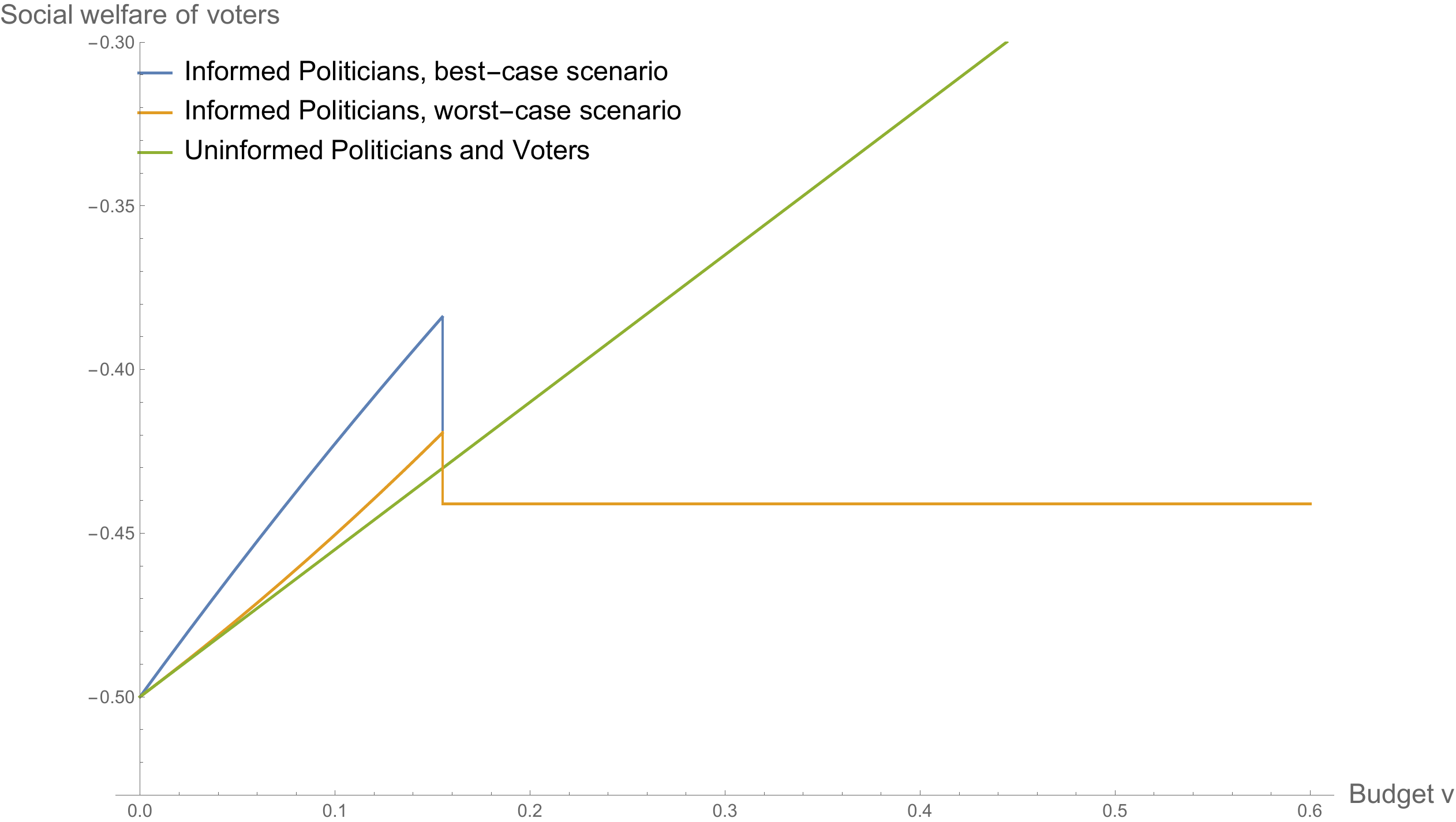}
  \caption{Social welfare of voters, value of the public good $\alpha $ fixed at $0.9$. }
  \label{fig:welfare-ext}
\end{figure}


\end{document}